\theoremstyle{definition}
\DeclarePairedDelimiter\parens{\lparen}{\rparen}
\DeclarePairedDelimiter\brackets{[}{]}
\newtheorem{lemma}{Lemma}
\newtheorem{proposition}{Proposition}
\begin{document}
\title{\LARGE Stochastic Geometry Analysis of Normalized SNR-Based Scheduling in Downlink Cellular Networks}

\author{Takuya~Ohto,~\IEEEmembership{}
Koji~Yamamoto,~\IEEEmembership{Member,~IEEE,}
Seong-Lyun~Kim,~\IEEEmembership{Member,~IEEE,}
Takayuki~Nishio,~\IEEEmembership{Member,~IEEE,}
and~Masahiro Morikura,~\IEEEmembership{Member,~IEEE}
\thanks{Manuscript received **. The associate editor coordinating the review of this letter and approving it for publication was **}
\thanks{T.\ Ohto, K.\ Yamamoto (corresponding author), T.\ Nishio, and M.\ Morikura are with Graduate School of Informatics, Kyoto University, Kyoto, 606-8501 Japan (e-mail: kyamamot@i.kyoto-u.ac.jp).}
\thanks{S.\ Kim is with the Radio Resource Management and Optimization Laboratory, School of Electrical and Electronic Engineering, Yonsei University, 50 Yonsei-ro, Seodaemun-gu, Seoul 120-749, Korea.}
\thanks{The present study was supported in part by a Grant-in-Aid for Scientific Research (C) (no.\@15K06062) from the Japan Society for the Promotion of Science (JSPS). }
\thanks{Digital Object Identifier ***}

\markboth{IEEE Communications Letters,~Vol.~, No.~, March 2011}%
{OHTO \MakeLowercase{\textit{et al.}}:Stochastic Geometry Analysis of Normalized SNR-Based Scheduling in Downlink Cellular Networks}

}
\maketitle

\begin{abstract}

The coverage probability and average data rate of normalized SNR-based scheduling in a downlink cellular network are derived
by modeling the locations of the base stations and users as two independent Poison point processes.
The scheduler selects the user with the largest instantaneous SNR normalized by the short-term average SNR.
In normalized SNR scheduling, the coverage probability when the desired signal experiences Rayleigh fading is shown to be given by a series of Laplace transforms of the probability density function of interference.
Also, a closed-form expression for the coverage probability is approximately achieved.
The results confirm that normalized SNR scheduling increases the coverage probability due to the multi-user diversity gain.
\end{abstract}
\begin{IEEEkeywords}
Stochastic geometry, cellular networks, channel-adaptive scheduling, coverage probability, average data rate.
\end{IEEEkeywords}
\IEEEpeerreviewmaketitle

\section{Introduction}
\IEEEPARstart{S}{tochastic} geometry enables tractable modeling and accurate analysis of cellular networks \cite{andrews2011tractable,yu2013downlink,elsawy2013stochastic,ElSawy2017CSTO}.
In particular, the coverage probability of a randomly chosen user depending on co-channel interference can be expressed in a closed form in a special case \cite{andrews2011tractable} and has been analyzed in a wide variety of scenarios \cite{elsawy2013stochastic,ElSawy2017CSTO}.
However, in terms of stochastic geometry analyses of cellular networks, where the locations of the base stations form a Poisson point process (PPP) as in \cite{andrews2011tractable,yu2013downlink}, as far as the authors know, no studies have been conducted into channel-adaptive scheduling.
In a single-cell environment, normalized SNR based scheduler and proportional fair (PF) scheduler \cite{viswanath2002opportunistic} were analyzed in \cite{yang2006performance} and \cite{Choi2007VT}, respectively.
In a multi-cell environment, normalized SNR based scheduler was analyzed assuming that interference is independent of time and the locations of users in \cite{Blaszczyszyn2009ComNet}.
In hexagonal cell arrangements with users forming a PPP, channel-adaptive scheduling based on the normalized signal-to-interference-plus-noise power ratio (SINR) was analyzed in \cite{Garcia-Morales2015CIT}.
Note that these studies \cite{Blaszczyszyn2009ComNet,Garcia-Morales2015CIT} evaluated average throughput of the network, not coverage probability.

In this letter, using the framework of stochastic geometry analysis of cellular networks \cite{andrews2011tractable,yu2013downlink}, we derive coverage probability and average data rate in downlink cellular networks with channel-adaptive scheduling, where users experience Rayleigh fading.
As in \cite{yang2006performance,Choi2007VT,Blaszczyszyn2009ComNet}, for ease of analysis, 
we employ a scheduling scheme based on the instantaneous SNR normalized by the {\em short-term} average SNR, while the coverage probability is defined as the probability that users achieve a target SINR.
%
%
Henceforth, we refer to this scheme as {\em normalized SNR scheduling}.
Normalized SNR scheduling achieves the same temporal fairness among users as round-robin (RR) scheduling \cite{yang2006performance},
and is
similar to PF scheduler, except that this scheme is based on the normalized SNR rather than the normalized data rate.
Note that normalized SNR scheduling is equivalent to PF scheduling if the data rate is proportional to the SNR \cite{Choi2007VT}.
We clarify that when the desired signal experiences Rayleigh fading in normalized SNR scheduling,
the coverage probability is given by a series of Laplace transforms of the probability density function (pdf) of interference.
We also evaluate the {\em scheduling gain} \cite{berggren2004asymptotically}, which is defined as the ratio of the average data rate of normalized SNR scheduling to that of RR scheduling.

 {\em Notation}:
$\mathbb{E}[\cdot]$ denotes the expectation operator,
$f_X(\cdot)$ denotes the pdf of a random variable $X$,
$F_X(\cdot)$ denotes the cumulative distribution function (cdf) of $X$,
$\mathcal{L}_X(\cdot)$ denotes the Laplace transform of the pdf of $X$,
and $\Gamma(\cdot)$ denotes the gamma function.

\section{System Model}\label{SM}
Our system model of a downlink cellular network consists of both base stations (BSs) and users, as was discussed in \cite{yu2013downlink}.
Note that the main difference between our system model and the model in \cite{andrews2011tractable} is in considering the user distribution.

The locations of the BSs are assumed to be distributed according to a homogeneous PPP $\Phi_\mathrm{b}$ with intensity $\lambda_\mathrm{b}$ on the Euclidean plane $\mathbb{R}^2$.
We assume that each BS operates with an identical transmission power $P_\mathrm{b}$.
The locations of the users  are also 
distributed according to a homogeneous PPP $\Phi_\mathrm{u}$ with intensity $\lambda_\mathrm{u}$.
Each user is 
associated with the nearest BS
\cite{yu2013downlink}.
That is, the cell area of each BS comprises a Voronoi tessellation, 
as shown in Fig.~\ref{fig:coverage101}.
We consider 
one resource block to assign, and
each BS is assumed to serve only one user in a resource block at any given time.

\begin{figure}[t]
\centering
   \includegraphics[width=.5\linewidth]{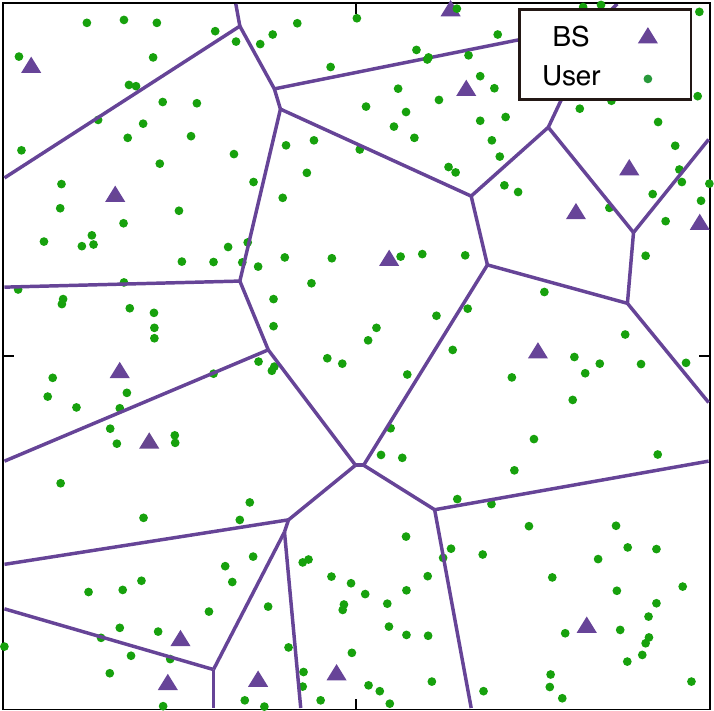}
  \caption{Deployment of BSs and users.}
  \label{fig:coverage101}
\end{figure}

The scheduler selects the user with the largest instantaneous SNR normalized by the short-term average SNR, which is the average value of the instantaneous SNR over a period when variation in the distance between the user and its associated BS is negligible as in \cite{yang2006performance,viswanath2002opportunistic}.
We assume that users experience quasi-static Rayleigh fading, i.e., the channel gain is constant over a time slot.
As in \cite{yang2006performance,Blaszczyszyn2009ComNet,viswanath2002opportunistic}, we assume perfect channel estimation at the beginning of each time slot.
The instantaneous SNR of the user at a distance $r$ from the associated BS is, therefore, $P_\mathrm{b}hr^{-\alpha}/\sigma_\mathrm{n}^2$, where $h$ represents the fading gain, which is an exponentially distributed random variable with unit mean, i.e., $h\sim\exp(1)$, $\alpha>2$ represents the path loss exponent,
and $\sigma_\mathrm{n}^2$ represents the noise power.

We consider two scenarios: Scenario 1 where all of the BSs continually transmit signals independently of the number of associated users; and
Scenario 2 where a BS that has no user to serve does not transmit any signals.
The coverage probability and scheduling gain in the first scenario are discussed in Section \ref{scenario1}, and those in the second scenario are given in Section \ref{scenario2}.

\section{Coverage Probability and Average Data Rate}\label{CP}
The coverage probability of the tagged user can be derived as a function of the BS and user densities when normalized SNR scheduling is applied.
The coverage probability of a tagged user is the probability that the tagged user can achieve a target SINR $\theta$ when the tagged user is scheduled, which is defined as
$p_\mathrm{c}(\theta)\coloneqq \mathbb{P}(\mathit{SINR}>\theta)$.
That is, the coverage probability is defined as the complementary cumulative distribution function (ccdf) of the instantaneous SINR.
Let the number of users in the cell of the BS that serves the tagged user, except for the tagged user, be a random variable $N$, and let the tagged user be located at a random distance $R$ from the associated BS.
Conditioning on $N=n$ and $R=r$, the instantaneous SINR of the tagged user can be written as:
\begin{align}
\mathit{SINR}=\frac{P_\mathrm{b}h_{n+1:n+1}r^{-\alpha}}{\sigma^{2}_\mathrm{n}+I},
\end{align}
where $I$ denotes the aggregate interference received at the tagged user, and $h_{n+1:n+1}$ denotes the fading gain of the tagged user when the tagged user is scheduled, as explained later in this letter.
The tagged user is assumed to be located at $o$, and the location of the associated BS is denoted by $b_o$.
In Scenario 1, because all of the BSs continually transmit signals,
the aggregate interference $I$ is given by:
\begin{align}
I=P_\mathrm{b}\sum_{b\in\Phi_\mathrm{b}\backslash \{b_o\}}g_bd_b^{-\alpha},
\end{align}
where $g_b$ is the fading gain between the tagged user and the BS at $b$, and $d_b$ is the distance from the tagged user to the BS at $b$.
In Scenario 2,  there might be a BS with no associated user, and thus,
the aggregate interference is received only from the active BSs, i.e., BSs that have at least one user to serve.

When the tagged user is scheduled, the coverage probability of the tagged user can be approximately given as: 
\begin{align}
\MoveEqLeft p_\mathrm{c}(\theta)
=\mathbb{E}_{R,N}[\mathbb{P}(\mathit{SINR}>\theta\mid r, n)]\nonumber\\
&\simeq \sum_{n=0}^{\infty}\int_{0}^{\infty}\mathbb{P}(\mathit{SINR}>\theta\mid r,n)f_{R}(r)f_{N}(n)\,\mathrm{d}r,
\label{eq:coverage_origin}
\end{align}
where we assume that $N$ and $R$ are independent, and $\mathbb{P}(\mathit{SINR}>\theta\mid r, n)$ denotes the coverage probability conditioning on $R=r$ and $N=n$, which is given by the ccdf of  $h_{n+1:n+1}$, $\mathbb{P}(h_{n+1:n+1}>r^\alpha\theta(\sigma_\mathrm{n}^2+I)/P_\mathrm{b})$.
The accuracy of the approximation is discussed in Section \ref{NE}.
According to \cite{andrews2011tractable,haenggi2012stochastic}, the pdf of $R$ is given by:
\begin{align}
f_R(r)=2\pi\lambda_\mathrm{b} r\mathrm{e}^{-\pi \lambda_\mathrm{b} r^2}.
\label{eq:pdf_r}
\end{align}
In addition, $f_N(n)$ denotes the probability mass function (pmf) of $N$.
From Slivnyak's theorem \cite{haenggi2012stochastic}, the locations of the other users follow the reduced Palm distribution with $\Phi_\mathrm{u}$.
According to \cite[Lemma 1]{akoum2013interference}, the pmf of $N$ is given by:
\begin{align}
f_N(n) 
&=\frac{\Gamma(n+c+1)(\lambda_\mathrm{u}/c\lambda_\mathrm{b})^n}{\Gamma(n+1)\Gamma(c+1)(\lambda_\mathrm{u}/c\lambda_\mathrm{b}+1)^{n+c+1}},
\label{eq:pmf_n}
\end{align}
where $c=7/2$.

\addtocounter{equation}{3}
\begin{figure*}[!t]
\vspace{-7mm}
\begin{align}
p_\mathrm{c}(\theta)&\simeq
\pi\lambda_\mathrm{b}\sum_{n=0}^{\infty}
\frac{\Gamma(n+c+1)\parens*{\frac{\lambda_\mathrm{u}}{c\lambda_\mathrm{b}}}^n}{\Gamma(n+1)\Gamma(c+1)\parens*{\frac{\lambda_\mathrm{u}}{c\lambda_\mathrm{b}}+1}^{n+c+1}}
\sum_{k=1}^{n+1}\tbinom{n+1}{k} (-1)^{k+1}\int_{0}^{\infty}\mathrm{e}^{-\pi\lambda_\mathrm{b}v(1+\rho(k\theta,\alpha))- \frac{kv^{\alpha/2}\theta\sigma_\mathrm{n}^2}{P_\mathrm{b}}}
\,\mathrm{d}v
\label{eq:theorem101}\\
\rho(k\theta,\alpha) &= (k\theta)^{2/\alpha} \int_{(k\theta)^{-2/\alpha}}^{\infty}\frac{1}{1+u^{\alpha/2}}\,\mathrm{d}u
\label{eq:theorem102}
\end{align}
\vspace{-5mm}
\hrulefill
\end{figure*}
\addtocounter{equation}{-5}

To evaluate \eqref{eq:coverage_origin},
we
derive the cdf of the fading gain of the tagged user when the tagged user is scheduled, $h_{n+1:n+1}$.
Let $i=1,\ldots, n+1$ be the index of users in the cell of BS $b_o$.
Letting the fading gain of $i$th user be denoted by $h_i$, 
$\{h_i\}$ are independent and identically distributed (i.i.d.) exponential random variables with unit mean.
Letting the distance from $i$th user to the associated BS be denoted by $r_i$,
the instantaneous SNR of the $i$th user, $\gamma_i$, is given as $\gamma_i=P_\mathrm{b}h_ir_i^{-\alpha}/\sigma_\mathrm{n}^2$, and its short-term average SNR, $\bar{\gamma}_i$, is given as $\bar{\gamma}_i\coloneqq\mathbb{E}[\gamma_i]=P_\mathrm{b}r_i^{-\alpha}/\sigma_\mathrm{n}^2$.
In normalized SNR scheduling, the scheduler selects the user with the largest instantaneous SNR normalized by the short-term average SNR.
We obtain $\gamma_i/\bar{\gamma}_i=h_i$.
That is, in normalized SNR scheduling, the scheduler selects the user with the largest fading gain.
The order statistics \cite{david1981order} obtained by arranging the values of $n+1$ $h_{i}$'s in increasing order of magnitude
are denoted as:
$h_{1:n+1}<h_{2:n+1}<\cdots<h_{n+1:n+1}$, and
the fading gain of the tagged user when the tagged user is scheduled is denoted by $h_{n+1:n+1}$.
The cdf of $h_{n+1:n+1}$ can be formulated by the same method as the calculation of the selection combiner output
because
in selection combining, 
the combiner output SNR is the largest SNR of all the branches.
According to \cite[\S7.2.2]{goldsmith2005wireless}, the cdf of $h_{n+1:n+1}$ can be written as:
\begin{align}
 F_{h_{n+1:n+1}}(x)=(1-\mathrm{e}^{-x})^{n+1}.
\label{eq:cdf_os}
\end{align}

\begin{lemma}
 \label{lemma101}
 When the desired signal experiences Rayleigh fading in normalized SNR scheduling, the coverage probability conditioning on $R=r$ and $N=n$, $\mathbb{P}(\mathit{SINR}>\theta \mid r,n)$, is given by a series of Laplace transforms of the pdf of interference $I$ as in \eqref{eq:appen_laplus}.
\end{lemma}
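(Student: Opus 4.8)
The plan is to begin from the conditional coverage probability written as the ccdf of the scheduled fading gain $h_{n+1:n+1}$, expand it with the binomial theorem, and identify the resulting expectations over $I$ as Laplace transforms of its pdf.

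First I would write $\mathbb{P}(\mathit{SINR}>\theta\mid r,n)=\mathbb{P}\!\left(h_{n+1:n+1}>r^{\alpha}\theta(\sigma_\mathrm{n}^{2}+I)/P_\mathrm{b}\mid r,n\right)$, which by \eqref{eq:cdf_os} equals $1-(1-\mathrm{e}^{-x})^{n+1}$ evaluated at $x=r^{\alpha}\theta(\sigma_\mathrm{n}^{2}+I)/P_\mathrm{b}$. Applying the binomial theorem to $(1-\mathrm{e}^{-x})^{n+1}=\sum_{k=0}^{n+1}\binom{n+1}{k}(-1)^{k}\mathrm{e}^{-kx}$ and observing that the $k=0$ term cancels the leading unity, the ccdf reduces to $\sum_{k=1}^{n+1}\binom{n+1}{k}(-1)^{k+1}\mathrm{e}^{-kx}$.

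Next I would substitute $x$ and factor each exponential into a deterministic noise part $\mathrm{e}^{-kr^{\alpha}\theta\sigma_\mathrm{n}^{2}/P_\mathrm{b}}$ and a random interference part $\mathrm{e}^{-(kr^{\alpha}\theta/P_\mathrm{b})I}$, then take the expectation over $I$ conditioned on $r$ and $n$. Because the sum is finite, linearity of expectation justifies exchanging summation and expectation term by term, and each $\mathbb{E}_{I}[\mathrm{e}^{-(kr^{\alpha}\theta/P_\mathrm{b})I}]$ is by definition the Laplace transform $\mathcal{L}_{I}(kr^{\alpha}\theta/P_\mathrm{b})$ of the pdf of $I$, yielding the series form in \eqref{eq:appen_laplus}.

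The main difficulty here is bookkeeping rather than analysis: one must keep $I$ random given $r$ and $n$ so that $\mathcal{L}_{I}$ is taken conditionally (encoding that the interferers lie outside the disk of radius $r$), and verify that the constant binomial term cancels exactly so the sum begins at $k=1$. Evaluating $\mathcal{L}_{I}$ in closed form---which is what introduces $\rho(k\theta,\alpha)$ into the full coverage expression \eqref{eq:theorem101}---is deferred; the lemma only requires recognizing $\mathbb{P}(\mathit{SINR}>\theta\mid r,n)$ as this finite series of Laplace transforms.
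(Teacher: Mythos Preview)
Your proposal is correct and follows essentially the same route as the paper: rewrite the conditional coverage as the ccdf of $h_{n+1:n+1}$, binomially expand $(1-\mathrm{e}^{-x})^{n+1}$, cancel the $k=0$ term against the leading $1$, factor the exponent into noise and interference parts, and identify $\mathbb{E}_I[\mathrm{e}^{-(kr^\alpha\theta/P_\mathrm{b})I}]=\mathcal{L}_I(kr^\alpha\theta/P_\mathrm{b})$. Your added remarks about the finiteness of the sum (justifying the interchange) and the conditional nature of $\mathcal{L}_I$ given $r$ are sound and do not depart from the paper's argument.
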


\begin{proof}
We have:
\begin{align}
\MoveEqLeft \mathbb{P}(\mathit{SINR}>\theta\mid r,n)
=\mathbb{P}(h_{n+1:n+1}>r^\alpha\theta(\sigma_\mathrm{n}^2+I)/P_\mathrm{b})\nonumber\\
={}& \mathbb{E}_I\brackets*{
1-
\sum_{k=0}^{n+1}\tbinom{n+1}{k}(-1)^{k}\mathrm{e}^{- \frac{kr^{\alpha}\theta\sigma_\mathrm{n}^2}{P_\mathrm{b}}}
\mathrm{e}^{- \frac{kr^{\alpha}\theta I}{P_\mathrm{b}}}
}
\label{eq:appen_2}\\
={}& 1-
\sum_{k=0}^{n+1}\tbinom{n+1}{k}(-1)^{k}\mathrm{e}^{- \frac{kr^{\alpha}\theta\sigma_\mathrm{n}^2}{P_\mathrm{b}}}
\mathbb{E}_I
\brackets*{
\mathrm{e}^{- \frac{kr^{\alpha}\theta I}{P_\mathrm{b}}}
}
\nonumber\\
={}& 
\sum_{k=1}^{n+1}\tbinom{n+1}{k}(-1)^{k+1}\mathrm{e}^{- \frac{kr^{\alpha}\theta\sigma_\mathrm{n}^2}{P_\mathrm{b}}}\mathcal{L}_I(kr^{\alpha}\theta/P_\mathrm{b}). 
\label{eq:appen_laplus}
\end{align}
The important point is that the coverage probability is given by an expectation of a series of exponential functions of interference $I$ as in \eqref{eq:appen_2}
because the cdf of the fading gain of the scheduled user \eqref{eq:cdf_os} can be written as a series of exponential functions from the binomial theorem, i.e.,
$ F_{h_{n+1:n+1}}(x)
=\sum_{k=0}^{n+1}\tbinom{n+1}{k}(-1)^{k}\mathrm{e}^{-kx}.$
Note that the reason for assuming Rayleigh fading for the desired signal is its tractability as discussed in \cite[Lemma 1]{andrews2011tractable}.
\end{proof}

\subsection{Scenario 1: Interference from All BSs}\label{scenario1}
In this scenario, all BSs continually transmit signals regardless of the number of associated users, and the tagged user receives interference from all BSs, except for its associated BS.
Substituting \eqref{eq:pdf_r}, \eqref{eq:pmf_n}, and \eqref{eq:appen_laplus} into \eqref{eq:coverage_origin}, 
we obtain $p_\mathrm{c}(\theta)$, as follows:
\begin{proposition}
\label{theorem101}
When interference experiences Rayleigh fading in normalized SNR scheduling, the coverage probability
is given by \eqref{eq:theorem101} and \eqref{eq:theorem102}.

\end{proposition}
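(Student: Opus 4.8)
The plan is to assemble the proposition by substituting the three explicit ingredients already in hand---the conditional coverage probability of Lemma~\ref{lemma101} (namely \eqref{eq:appen_laplus}), the contact-distance density \eqref{eq:pdf_r}, and the pmf \eqref{eq:pmf_n}---into the coverage expression \eqref{eq:coverage_origin}, and then to evaluate the one object still left implicit there, the interference Laplace transform $\mathcal{L}_I(kr^\alpha\theta/P_\mathrm{b})$. The binomial sum over $k$ and the $f_N(n)$ prefactor of \eqref{eq:appen_laplus} carry through the $r$-integration untouched, so the entire task reduces to computing $\mathcal{L}_I$ and then performing a single integral in $r$.

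First I would compute $\mathcal{L}_I(s)=\mathbb{E}[\mathrm{e}^{-sI}]$. Conditioned on $R=r$, nearest-BS association places every interfering BS outside the disk of radius $r$ about the tagged user, so $\Phi_\mathrm{b}\setminus\{b_o\}$ is a homogeneous PPP of intensity $\lambda_\mathrm{b}$ on $\{x:\verts*{x}>r\}$. Applying the probability generating functional and averaging over the unit-mean exponential gains $g_b$, for which $\mathbb{E}_g[\mathrm{e}^{-sP_\mathrm{b}gx^{-\alpha}}]=(1+sP_\mathrm{b}x^{-\alpha})^{-1}$, gives the standard form
\begin{align*}
\mathcal{L}_I(s)=\exp\parens*{-2\pi\lambda_\mathrm{b}\int_r^\infty\frac{sP_\mathrm{b}x^{-\alpha}}{1+sP_\mathrm{b}x^{-\alpha}}\,x\,\mathrm{d}x}.
\end{align*}
Specialising to $s=kr^\alpha\theta/P_\mathrm{b}$ and substituting $u=(x/r)^2$ turns the exponent into $-\pi\lambda_\mathrm{b}r^2\int_1^\infty k\theta/(u^{\alpha/2}+k\theta)\,\mathrm{d}u$; a further rescaling $u\mapsto(k\theta)^{2/\alpha}u$ matches this integral term-by-term to the definition of $\rho(k\theta,\alpha)$ in \eqref{eq:theorem102}, leaving the compact factor $\mathcal{L}_I(kr^\alpha\theta/P_\mathrm{b})=\mathrm{e}^{-\pi\lambda_\mathrm{b}r^2\rho(k\theta,\alpha)}$.

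Finally I would carry out the $r$-integration. Inserting this factor together with \eqref{eq:pdf_r} into each term of \eqref{eq:appen_laplus} and then into \eqref{eq:coverage_origin}, the change of variables $v=r^2$ merges the fading-distance factor $\mathrm{e}^{-\pi\lambda_\mathrm{b}r^2}$, the interference factor $\mathrm{e}^{-\pi\lambda_\mathrm{b}r^2\rho(k\theta,\alpha)}$, and the noise factor $\mathrm{e}^{-kr^\alpha\theta\sigma_\mathrm{n}^2/P_\mathrm{b}}$ into the single inner integral of \eqref{eq:theorem101}, with the $2\pi\lambda_\mathrm{b}r$ of \eqref{eq:pdf_r} absorbed as the prefactor $\pi\lambda_\mathrm{b}$. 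Reinstating $f_N(n)$ from \eqref{eq:pmf_n} then reproduces \eqref{eq:theorem101} exactly.

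I expect the Laplace-transform step to be the main obstacle: the reduction to a PPP restricted to $\verts*{x}>r$ rests on the conditioning/void property of the PPP under nearest-BS association, and collapsing the resulting double integral onto the single function $\rho$ requires the two changes of variable to line up precisely with \eqref{eq:theorem102}; the remaining $r$-integration is routine. It is worth noting that no new approximation enters here---the only $\simeq$ is the one inherited from \eqref{eq:coverage_origin}, where $N$ and $R$ are assumed independent.
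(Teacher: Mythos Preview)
Your proposal is correct and follows essentially the same route as the paper: obtain $\mathcal{L}_I(kr^{\alpha}\theta/P_\mathrm{b})=\mathrm{e}^{-\pi\lambda_\mathrm{b}r^2\rho(k\theta,\alpha)}$ with $\rho$ as in \eqref{eq:theorem102}, insert this into \eqref{eq:appen_laplus} and \eqref{eq:coverage_origin} together with \eqref{eq:pdf_r} and \eqref{eq:pmf_n}, and substitute $v=r^2$. The only difference is that the paper simply cites \cite[Theorem~2]{andrews2011tractable} for the Laplace-transform expression, whereas you rederive it via the PGFL and the two changes of variable---your derivation is correct and indeed reproduces that cited result.
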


\addtocounter{equation}{2}


\begin{proof}

According to \cite[Theorem 2]{andrews2011tractable},
$\mathcal{L}_I(kr^{\alpha}\theta/P_\mathrm{b})$ can be written as:
\begin{align}
\mathcal{L}_I(kr^{\alpha}\theta/P_\mathrm{b})&=
\exp(-\pi r^2 \lambda_\mathrm{b} \rho(k\theta,\alpha)), \\
\rho(k\theta,\alpha) &= (k\theta)^{2/\alpha} \int_{(k\theta)^{-2/\alpha}}^{\infty}\frac{1}{1+u^{\alpha/2}}\,\mathrm{d}u.
\end{align}
By using the substitution $r^2 \rightarrow v$, 
\begin{align}
p_\mathrm{c}(\theta)&\simeq
\pi\lambda_\mathrm{b}\sum_{n=0}^{\infty}
f_{N}(n)
\sum_{k=1}^{n+1}\tbinom{n+1}{k}(-1)^{k+1}\nonumber\\
&\hspace{-2mm}\cdot \int_{0}^{\infty}\mathrm{e}^{-\pi\lambda_\mathrm{b}v(1+\rho(k\theta,\alpha))- \frac{kv^{\alpha/2}\theta\sigma_\mathrm{n}^2}{P_\mathrm{b}}}
\,\mathrm{d}v.  \qedhere
\end{align}

\end{proof}

%
%


To simplify this expression, we consider a special case
as in \cite{andrews2011tractable}.
Assuming $\sigma_\mathrm{n}^2=0$ and $\alpha=4$, 
the expression of the coverage probability can be simplified to:
\begin{align}
p_\mathrm{c}(\theta)\simeq
\sum_{n=0}^{\infty}
f_{N}(n)
\sum_{k=1}^{n+1}\frac{\tbinom{n+1}{k}(-1)^{k+1}}{1+\sqrt{k\theta}\arctan\sqrt{k\theta}}.
\label{eq:cover_p303}
\end{align}
In this case, the coverage probability depends only on the target SINR $\theta$ and the ratio of the density of users to that of BSs, $\lambda_\mathrm{u}/\lambda_\mathrm{b}$.

For ease of performance evaluation, when $\lambda_\mathrm{u}/\lambda_\mathrm{b}$ is an integer, by roughly treating that $f_N(\lambda_\mathrm{u}/\lambda_\mathrm{b})=1$ and $f_N(n)=0$ for $n \neq \lambda_\mathrm{u}/\lambda_\mathrm{b}$ in (\ref{eq:cover_p303})\footnote{When $\lambda_\mathrm{u}/\lambda_\mathrm{b}$ is an integer, both $\lambda_\mathrm{u}/\lambda_\mathrm{b}-1$ and $\lambda_\mathrm{u}/\lambda_\mathrm{b}$ are the modes of $N$. We focus on the latter mode $\lambda_\mathrm{u}/\lambda_\mathrm{b}$ and treat $f_N(\lambda_\mathrm{u}/\lambda_\mathrm{b})=1$.}, we propose to approximate (\ref{eq:cover_p303}) to a closed form as:
\begin{align}
p_\mathrm{c}(\theta)\simeq
\sum_{k=1}^{\lambda_\mathrm{u}/\lambda_\mathrm{b}+1}\frac{\tbinom{\lambda_\mathrm{u}/\lambda_\mathrm{b}+1}{k}(-1)^{k+1}}{1+\sqrt{k\theta}\arctan\sqrt{k\theta}}.
\label{eq:further_approx}
\end{align}
The accuracy of the approximation is discussed in Section IV.

Here, we evaluate the scheduling gain of normalized SNR scheduling.
Based on the idea described in \cite{berggren2004asymptotically}, the scheduling gain is defined as $G(\lambda_\mathrm{b},\lambda_\mathrm{u})\coloneqq\tau_\mathrm{s}(\lambda_\mathrm{b},\lambda_\mathrm{u})/\tau_\mathrm{r}(\lambda_\mathrm{b})$, where $\tau_\mathrm{s}(\lambda_\mathrm{b},\lambda_\mathrm{u})$ and $\tau_\mathrm{r}(\lambda_\mathrm{b})$ denote average data rate of normalized SNR scheduling and RR scheduling, respectively.
The average data rate is defined as the mean rate in units of nats/Hz for the tagged user, which is assumed to achieve the Shannon bound at its instantaneous SINR, and $\tau_\mathrm{r}(\lambda_\mathrm{b})$ is derived in \cite{andrews2011tractable}.

We obtain $\tau_\mathrm{s}(\lambda_\mathrm{b},\lambda_\mathrm{u})$ as follows:
\begin{proposition}
\label{theorem201}
The average data rate of normalized SNR scheduling can be written as:
\begin{multline}
 \tau_\mathrm{s}(\lambda_\mathrm{b},\lambda_\mathrm{u}) \simeq
\pi\lambda_\mathrm{b}\sum_{n=0}^{\infty}
f_N(n)
\int_{0}^{\infty}\!\!\!
\int_{0}^{\infty}\sum_{k=1}^{n+1}\tbinom{n+1}{k}\\
\cdot(-1)^{k+1}\mathrm{e}^{-\pi\lambda_\mathrm{b}v(1+\rho(k(\mathrm{e}^t-1),\alpha))- \frac{kv^{\alpha/2}(\mathrm{e}^t-1)\sigma_\mathrm{n}^2}{P_\mathrm{b}}}
\,\mathrm{d}t\,\mathrm{d}v.
\label{eq:AER1}
\end{multline}
\end{proposition}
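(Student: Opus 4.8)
The plan is to start from the definition of the average data rate as the ergodic rate of the tagged user, $\tau_\mathrm{s}(\lambda_\mathrm{b},\lambda_\mathrm{u})=\mathbb{E}[\ln(1+\mathit{SINR})]$ (in nats/Hz, consistent with the stated Shannon-bound convention), and to convert this expectation into an integral of the complementary cdf of the SINR. Since $\ln(1+\mathit{SINR})$ is a nonnegative random variable, I would use the identity $\mathbb{E}[\ln(1+\mathit{SINR})]=\int_0^\infty \mathbb{P}(\ln(1+\mathit{SINR})>t)\,\mathrm{d}t=\int_0^\infty \mathbb{P}(\mathit{SINR}>\mathrm{e}^t-1)\,\mathrm{d}t$, exactly as in the rate derivation of \cite{andrews2011tractable}. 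This reduces the problem to reusing the coverage probability already computed: the inner probability $\mathbb{P}(\mathit{SINR}>\mathrm{e}^t-1)$ is precisely $p_\mathrm{c}(\theta)$ evaluated at the threshold $\theta=\mathrm{e}^t-1$.

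Concretely, I would condition on $R=r$ and $N=n$ and carry out the same averaging as in \eqref{eq:coverage_origin}, writing the conditional coverage probability through Lemma~\ref{lemma101}, i.e., \eqref{eq:appen_laplus}, with $\theta$ replaced by $\mathrm{e}^t-1$. Then I would substitute the closed-form Laplace transform of the interference from \cite[Theorem 2]{andrews2011tractable}, which yields the factor $\exp(-\pi r^2\lambda_\mathrm{b}\rho(k(\mathrm{e}^t-1),\alpha))$ together with the noise term $\exp(-k r^\alpha(\mathrm{e}^t-1)\sigma_\mathrm{n}^2/P_\mathrm{b})$. Applying the density \eqref{eq:pdf_r} with the change of variable $r^2\to v$ (so that $f_R(r)\,\mathrm{d}r$ contributes the factor $\pi\lambda_\mathrm{b}\,\mathrm{d}v$ and $r^\alpha\to v^{\alpha/2}$), weighting by $f_N(n)$ from \eqref{eq:pmf_n}, and finally integrating over $t\in(0,\infty)$, I would assemble the triple summation–integration of \eqref{eq:AER1}.

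The main obstacle is justifying the interchange of the outer $t$-integral with the expectation over $R$, $N$, and $I$, and with the finite inner sum over $k$. Because the binomial expansion \eqref{eq:appen_laplus} carries alternating signs $(-1)^{k+1}$, Fubini–Tonelli cannot be invoked term by term on nonnegative integrands; I would instead argue on the combined nonnegative quantity $\mathbb{P}(\mathit{SINR}>\mathrm{e}^t-1\mid r,n)$, which is a genuine probability taking values in $[0,1]$, so that $\int_0^\infty\!\int_0^\infty \mathbb{P}(\cdot)\,f_R(r)\,\mathrm{d}r\,\mathrm{d}t$ converges and the order of the $t$- and $r$-integrals may be exchanged before the inner finite sum is expanded. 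The remaining manipulations are routine. I should also note that the $\simeq$ in \eqref{eq:AER1} inherits exactly the same approximation as \eqref{eq:coverage_origin}, namely the assumed independence of $N$ and $R$; no new approximation is introduced beyond the ergodic-rate identity, which is itself exact.
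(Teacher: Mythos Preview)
Your proposal is correct and follows essentially the same route as the paper: define $\tau_\mathrm{s}=\mathbb{E}[\ln(1+\mathit{SINR})]$, convert the expectation to $\int_0^\infty \mathbb{P}(\mathit{SINR}>\mathrm{e}^t-1)\,\mathrm{d}t$, condition on $R$ and $N$ via the approximation \eqref{eq:coverage_origin}, and then reuse Lemma~\ref{lemma101} and the Laplace transform from \cite[Theorem~2]{andrews2011tractable} together with the substitution $r^2\to v$, exactly as in Proposition~\ref{theorem101}. The paper's proof is terser (it simply refers to \cite[Theorem~3]{andrews2011tractable} and Proposition~\ref{theorem101}), while you additionally spell out the Fubini justification, but the substance is identical.
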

\begin{proof}
We have:
\begin{multline}
 \tau_\mathrm{s}(\lambda_\mathrm{b},\lambda_\mathrm{u})\coloneqq
\mathbb{E}[\ln(1+\mathit{SINR})]
\simeq \sum_{n=0}^{\infty}f_N(n)\int_{0}^{\infty}f_R(r)\\
\cdot\int_{0}^{\infty}\mathbb{P}\brackets*{\ln\parens*{1+\frac{P_\mathrm{b}h_{n+1:n+1}r^{-\alpha}}{\sigma_\mathrm{n}^2+I}}>t}\,\mathrm{d}t\,\mathrm{d}r.\nonumber
\end{multline} 
The rest of the proof is similar to that for \cite[Theorem 3]{andrews2011tractable} and Proposition \ref{theorem101}.
\end{proof}

In the case of $\sigma_\mathrm{n}^2=0$ and $\alpha=4$, the expression \eqref{eq:AER1} can be simplified to:
{\small
\begin{align}
 \sum_{n=0}^{\infty}f_N(n)
\int_{0}^{\infty}\sum_{k=1}^{n+1}
\frac{\tbinom{n+1}{k} (-1)^{k+1}}{1+\sqrt{k(\mathrm{e}^t-1)}\arctan\sqrt{k(\mathrm{e}^t-1)}}\,\mathrm{d}t.\nonumber
\end{align}
}

\subsection{Scenario 2: Interference Only from Active BSs}\label{scenario2}
We now consider the scenario where a BS that has no user to serve does not transmit any signals, as in \cite{yu2013downlink}.
In this scenario, the aggregate interference originates only from the active BSs, except for the BS associated with the tagged user.
Because the received power at the tagged user from the associated BS follows the identical distribution in both scenarios, and the density of the active BS is reduced from $\lambda_\mathrm{b}$ to $\lambda_\mathrm{b}(1-(1+\lambda_\mathrm{u}/c\lambda_\mathrm{b})^{-c})$ in Scenario 2,
according to \cite[Lemma 3]{yu2013downlink}, we obtain the coverage probability in Scenario 2 by substituting $(1-(1+\lambda_\mathrm{u}/c\lambda_\mathrm{b})^{-c})\rho(k\theta,\alpha)$ for $\rho(k\theta,\alpha)$ in \eqref{eq:theorem101}.

We then derive the scheduling gain $G(\lambda_\mathrm{b},\lambda_\mathrm{u})$
in Scenario 2.
We obtain the average data rate of normalized SNR scheduling by substituting $(1-(1+\lambda_\mathrm{u}/c\lambda_\mathrm{b})^{-c})\rho(k(\mathrm{e}^t-1),\alpha)$ for $\rho(k(\mathrm{e}^t-1),\alpha)$ in \eqref{eq:AER1}.
We also obtain the average data rate of RR scheduling by using \cite[Lemma 3]{yu2013downlink}.

\section{Numerical Examples}\label{NE}
We investigate the coverage probability and scheduling gain through numerical examples.
To confirm the impact of channel-adaptive scheduling on the performance of cellular networks,
we compare the coverage probability of the scheduled user in normalized SNR scheduling with that of a randomly scheduled user \cite{andrews2011tractable} with RR scheduler.

\begin{figure}[t]
\centering
 \includegraphics[width=.8\linewidth]{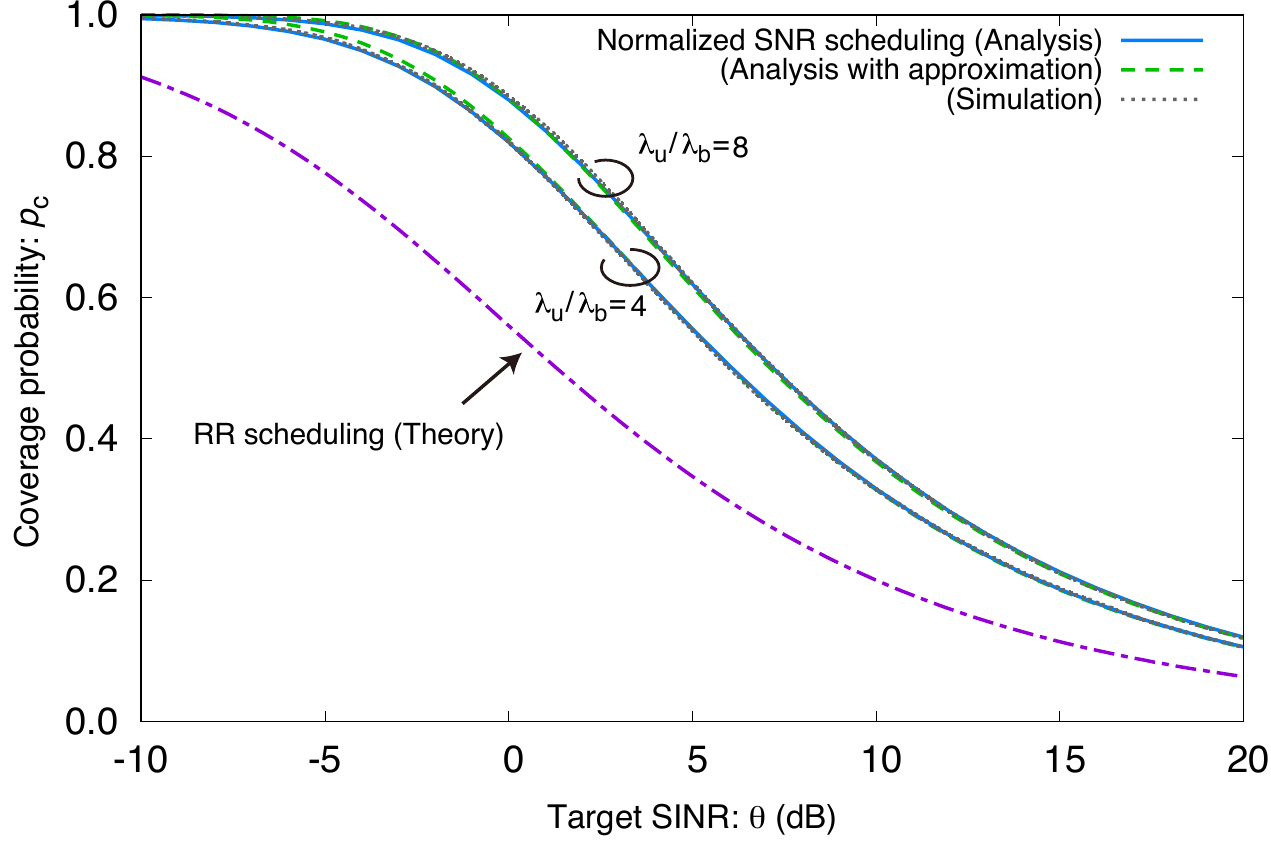}
  \caption{Coverage probability $p_\mathrm{c}$ for $\sigma_{\mathrm{n}}^2=0$, and $\alpha=4$. Analysis: \eqref{eq:cover_p303}, analysis with approximation: (\ref{eq:further_approx}).}
  \label{fig:coverage402}
\end{figure}

\begin{figure}[t]
\centering
   \includegraphics[width=.8\linewidth]{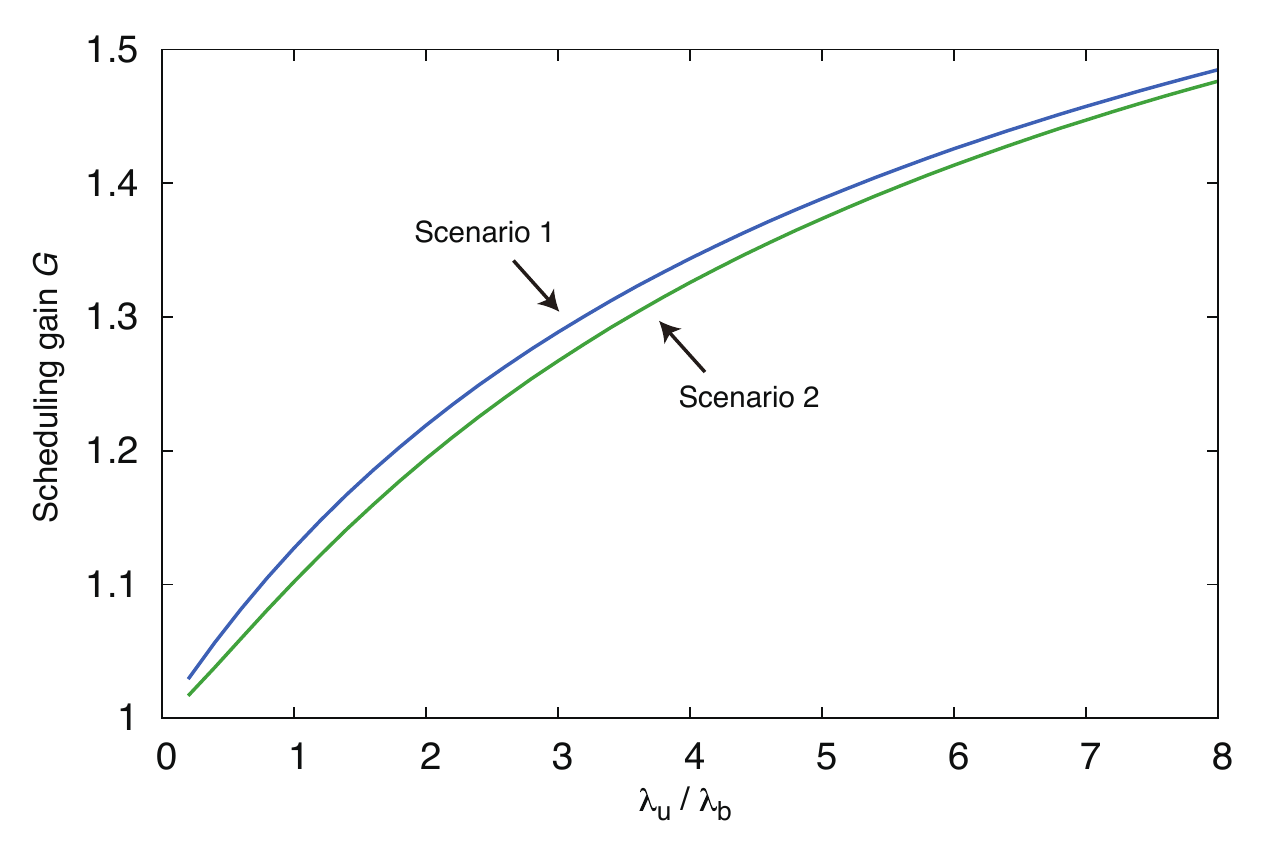}
  \caption{Scheduling gain $G$ for $\sigma_{\mathrm{n}}^2=0$, and $\alpha=4$.}
  \label{fig:s_gain}
\end{figure}

Fig.~\ref{fig:coverage402} shows the coverage probability in Scenario 1,
when $\sigma_{\mathrm{n}}^2=0$ and $\alpha=4$, \eqref{eq:cover_p303}, its approximation (\ref{eq:further_approx}), and its Monte Carlo simulation result.
In normalized SNR scheduling, the coverage probability increases along with $\lambda_\mathrm{u}/\lambda_\mathrm{b}$ according to 
multi-user diversity gain. 
In addition, simulation results coincide with \eqref{eq:cover_p303} and (\ref{eq:further_approx}).
 Thus, the accuracy of the approximation in (\ref{eq:coverage_origin}) and that in (\ref{eq:further_approx}) are validated.

Fig.~\ref{fig:s_gain} shows the scheduling gain in each scenario when $\sigma_{\mathrm{n}}^2=0$ and $\alpha=4$.
The scheduling gain increases along with $\lambda_\mathrm{u}/\lambda_\mathrm{b}$ due to the multi-user diversity gain.
In the limit $\lambda_\mathrm{u}/\lambda_\mathrm{b}\rightarrow \infty$, the scheduling gain in Scenario 2 would converge to that in Scenario 1 because the density of the active BSs, $\lambda_\mathrm{b}(1-(1+\lambda_\mathrm{u}/c\lambda_\mathrm{b})^{-c})$, approaches to $\lambda_\mathrm{b}$. 

\section{Conclusion}\label{CON}

This letter derived the coverage probability and average data rate of normalized SNR scheduler as a basic channel-adaptive scheduling scheme in cellular networks.
When the desired signal experiences Rayleigh fading, 
the coverage probability of normalized SNR scheduling is given by a series of Laplace transforms of the pdf of interference.
In the case of $\alpha=4$ and no noise, the coverage probability is only dependent on the target SINR and the ratio of the density of users to that of BSs,  $\lambda_\mathrm{u}/\lambda_\mathrm{b}$.
The closed-form expression for the outage probability is approximately given.
Numerical results confirmed that normalized SNR scheduling provides a higher coverage probability along with $\lambda_\mathrm{u}/\lambda_\mathrm{b}$ because of the multi-user diversity gain.
It was also confirmed that the scheduling gain increased along with $\lambda_\mathrm{u}/\lambda_\mathrm{b}$.


\begin{thebibliography}{10}
\providecommand{\url}[1]{#1}
\csname url@samestyle\endcsname
\providecommand{\newblock}{\relax}
\providecommand{\bibinfo}[2]{#2}
\providecommand{\BIBentrySTDinterwordspacing}{\spaceskip=0pt\relax}
\providecommand{\BIBentryALTinterwordstretchfactor}{4}
\providecommand{\BIBentryALTinterwordspacing}{\spaceskip=\fontdimen2\font plus
\BIBentryALTinterwordstretchfactor\fontdimen3\font minus
  \fontdimen4\font\relax}
\providecommand{\BIBforeignlanguage}[2]{{%
\expandafter\ifx\csname l@#1\endcsname\relax
\typeout{** WARNING: IEEEtran.bst: No hyphenation pattern has been}%
\typeout{** loaded for the language `#1'. Using the pattern for}%
\typeout{** the default language instead.}%
\else
\language=\csname l@#1\endcsname
\fi
#2}}
\providecommand{\BIBdecl}{\relax}
\BIBdecl

\bibitem{andrews2011tractable}
J.~G. Andrews, F.~Baccelli, and R.~K. Ganti, ``A tractable approach to coverage
  and rate in cellular networks,'' \emph{{IEEE} Trans. Commun.}, vol.~59,
  no.~11, pp. 3122--3134, Nov. 2011.

\bibitem{yu2013downlink}
S.~M. Yu and S.-L. Kim, ``Downlink capacity and base station density in
  cellular networks,'' in \emph{Proc. Int. Symp. Modeling Optim. Mobile, Ad
  Hoc, Wireless Netw. (WiOpt)}, Tsukuba, Japan, May 2013, pp. 119--124.

\bibitem{elsawy2013stochastic}
H.~ElSawy, E.~Hossain, and M.~Haenggi, ``Stochastic geometry for modeling,
  analysis, and design of multi-tier and cognitive cellular wireless networks:
  A survey,'' \emph{IEEE Commun. Surveys Tuts.}, vol.~15, no.~3, pp. 996--1019,
  June 2013.

\bibitem{ElSawy2017CSTO}
H.~ElSawy, A.~Sultan-Salem, M.-S. Alouini, and M.~Z. Win, ``Modeling and
  analysis of cellular networks using stochastic geometry: A tutorial,''
  \emph{{IEEE} Commun. Surveys Tuts.}, vol.~19, no.~1, pp. 167--203, 2017.

\bibitem{viswanath2002opportunistic}
P.~Viswanath, D.~N.~C. Tse, and R.~Laroia, ``Opportunistic beamforming using
  dumb antennas,'' \emph{IEEE Trans. Inf. Theory}, vol.~48, no.~6, pp.
  1277--1294, June 2002.

\bibitem{yang2006performance}
L.~Yang and M.-S. Alouini, ``Performance analysis of multiuser selection
  diversity,'' \emph{IEEE Trans. Veh. Technol.}, vol.~55, no.~6, pp.
  1848--1861, Nov. 2006.

\bibitem{Choi2007VT}
J.-G. Choi and S.~Bahk, ``Cell-throughput analysis of the proportional fair
  scheduler in the single-cell environment,'' \emph{{IEEE} Trans. Veh.
  Technol.}, vol.~56, no.~2, pp. 766--778, Mar. 2007.

\bibitem{Blaszczyszyn2009ComNet}
B.~B{\l}aszczyszyn and M.~K. Karray, ``Fading effect on the dynamic performance
  evaluation of {OFDMA} cellular networks,'' in \emph{Proc. Int. Conf. Commun.
  Netw. (ComNet)}, Tunisia, Nov. 2009, pp. 1--8.

\bibitem{Garcia-Morales2015CIT}
J.~Garcia-Morales, G.~Femenias, and F.~Riera-Palou, ``Channel-aware scheduling
  in {FFR}-aided {OFDMA}-based heterogeneous cellular networks,'' in
  \emph{Proc. {IEEE} CIT/IUCC/DASC/PICom}, Liverpool, UK, Oct. 2015, pp.
  44--51.

\bibitem{berggren2004asymptotically}
F.~Berggren and R.~J\"antti, ``Asymptotically fair transmission scheduling over
  fading channels,'' \emph{IEEE Trans. Wireless Commun.}, vol.~3, no.~1, pp.
  326--336, Jan. 2004.

\bibitem{haenggi2012stochastic}
M.~Haenggi, \emph{Stochastic Geometry for Wireless Networks}.\hskip 1em plus
  0.5em minus 0.4em\relax Cambridge Univ. Pr., Nov. 2012.

\bibitem{akoum2013interference}
S.~Akoum and R.~W. Heath, ``Interference coordination: Random clustering and
  adaptive limited feedback,'' \emph{IEEE Trans. Signal Process.}, vol.~61,
  no.~7, pp. 1822--1834, Apr. 2013.

\bibitem{david1981order}
H.~A. David and H.~N. Nagaraja, \emph{Order Statistics}.\hskip 1em plus 0.5em
  minus 0.4em\relax Wiley Online Library, July 1981.

\bibitem{goldsmith2005wireless}
A.~Goldsmith, \emph{Wireless Communications}.\hskip 1em plus 0.5em minus
  0.4em\relax Cambridge Univ. Pr., 2005.

\end{thebibliography}
\end{document}